\newcommand{\bra}[1]{\langle #1 |}
\newcommand{\ket}[1]{| #1 \rangle}
\newcommand{\braket}[2]{\langle #1 | #2 \rangle}
\newcommand{\ketbra}[2]{| #1 \rangle\langle #2 |}
\newcommand{\defeq}{\stackrel{\smash{\textnormal{\tiny def}}}{=}}
\DeclareMathOperator{\diag}{diag}
\newtheorem{lemma}{Lemma}
\newtheorem{theorem}{Theorem}
\def\Ddots{\mathinner{\mkern1mu\raise\p@
\vbox{\kern7\p@\hbox{.}}\mkern2mu
\raise4\p@\hbox{.}\mkern2mu\raise7\p@\hbox{.}\mkern1mu}}
\begin{document}


\title{Separability from Spectrum for Qubit--Qudit States}

\author{Nathaniel Johnston}%
\affiliation{Institute for Quantum Computing, University of Waterloo, Waterloo, Ontario, Canada}%

\begin{abstract}
	The separability from spectrum problem asks for a characterization of the eigenvalues of the bipartite mixed states $\rho$ with the property that $U^\dagger\rho U$ is separable for all unitary matrices $U$. This problem has been solved when the local dimensions $m$ and $n$ satisfy $m = 2$ and $n \leq 3$. We solve all remaining qubit--qudit cases (i.e., when $m = 2$ and $n \geq 4$ is arbitrary). In all of these cases we show that a state is separable from spectrum if and only if $U^\dagger\rho U$ has positive partial transpose for all unitary matrices $U$. This equivalence is in stark contrast with the usual separability problem, where a state having positive partial transpose is a strictly weaker property than it being separable.
\end{abstract}


\pacs{03.67.Bg, 02.10.Yn, 03.67.Mn}

\maketitle

In the theory of quantum entanglement, a mixed quantum state $\rho \in M_m \otimes M_n$ is called \emph{separable} \cite{W89} if there exist constants $p_i \geq 0$ and pure states $\ket{v_i} \in \mathbb{C}^m$ and $\ket{w_i} \in \mathbb{C}^n$ such that $\sum_i p_i = 1$ and
\begin{align*}
	\rho = \sum_i p_i \ketbra{v_i}{v_i} \otimes \ketbra{w_i}{w_i}.
\end{align*}

The distinction between states that are separable or \emph{entangled} (i.e., not separable) is one of the most important and active areas of research in quantum information theory, as entangled states exhibit some of the strangest properties of the quantum world, and many interesting quantum information processing tasks can be done better with entangled states than with separable states alone \cite{HHH09,PW09}.

While it is expected that there is no efficient procedure to determine whether or not a given mixed state is separable \cite{G03,G10}, there are many one-sided tests that prove separability or entanglement of certain subsets of states---see \cite{GT09} and its references. For example, it is known that all separable states have \emph{positive partial transpose (PPT)}: $(id_m \otimes T)(\rho) \geq 0$, where $\geq 0$ indicates positive semidefiniteness, $id_m$ is the identity map on $M_m$, and $T$ is the transpose map on $M_n$ \cite{P96}. Moreover, if $m = 2$ and $n \leq 3$ then states that are PPT are necessarily separable (but this implication fails when $n \geq 4$) \cite{S63,W76,HHH96}.

The \emph{separability from spectrum} problem \cite{OpenProb15} asks for a characterization of the states $\rho \in M_m \otimes M_n$ with the property that $U^\dagger\rho U$ is separable for all unitary matrices $U \in M_m \otimes M_n$, which is equivalent to asking which tuples of real numbers $\lambda_1 \geq \lambda_2 \geq \ldots \geq \lambda_{mn} \geq 0$ are such that every state $\rho \in M_m \otimes M_n$ with eigenvalues $\{\lambda_i\}$ is separable. We note that states that are separable from spectrum are sometimes called \emph{absolutely separable} \cite{KZ00}, but we do not use that terminology here.

One motivation for this problem comes from the fact that there is a ball of separable states centered at the maximally-mixed state $\tfrac{1}{mn}(I\otimes I) \in M_m \otimes M_n$. The exact size of the largest such ball is known \cite{GB02}, and every state within this ball is separable from spectrum. However, there are also separable from spectrum states outside of this ball, so it is desirable to characterize them. From an experimental point of view, characterizing separable states based on their eigenvalues could be useful because it is easier to experimentally determine the eigenvalues of a state than it is to completely reconstruct the state via tomography \cite{EAO02,TOKKNN13}, so a solution to the separability from spectrum problem makes it easier to experimentally determine separability of some states.

Separability from spectrum was first characterized in the $m = n = 2$ case in \cite{VAD01}, where it was shown that $\rho \in M_2 \otimes M_2$ is separable from spectrum if and only if its eigenvalues satisfy $\lambda_1 \leq \lambda_3 + 2\sqrt{\lambda_2 \lambda_4}$.

The next major progress on this problem was presented in \cite{Hil07}, where they considered (and completely solved) the closely-related problem of characterizing the states that are \emph{PPT from spectrum}---states $\rho \in M_m \otimes M_n$ with the property that $U^\dagger \rho U$ is PPT for all unitary matrices $U \in M_m \otimes M_n$. In the special case when $m = 2$, they showed that $\rho$ is PPT from spectrum if and only if $\lambda_{1} \leq \lambda_{2n-1} + 2\sqrt{\lambda_{2n-2}\lambda_{2n}}$. Since a state is PPT if and only if it separable when $m = 2$ and $n \leq 3$, an immediate corollary is that $\rho \in M_2 \otimes M_3$ is separable from spectrum if and only if $\lambda_1 \leq \lambda_5 + 2\sqrt{\lambda_4 \lambda_6}$. Similar to before, PPT from spectrum states are sometimes called \emph{absolutely PPT} \cite{CNY12}, but we do not use this terminology here.

To date, the $m = 2, n \leq 3$ cases are the only cases of the separability from spectrum problem that have been solved. Our contribution is to solve the separability from spectrum problem when $m = 2$ and $n$ is arbitrary. Our main result is as follows:
\begin{theorem}\label{thm:main}
	Denote the eigenvalues of $\rho \in M_2 \otimes M_n$ by $\lambda_1 \geq \lambda_2 \geq \ldots \geq \lambda_{2n} \geq 0$. The following are equivalent:
	\begin{enumerate}[(1)]
		\item $\rho$ is separable from spectrum;
		\item $\rho$ is PPT from spectrum; and
		\item $\lambda_{1} \leq \lambda_{2n-1} + 2\sqrt{\lambda_{2n-2}\lambda_{2n}}$.
	\end{enumerate}
\end{theorem}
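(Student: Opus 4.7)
The implication (1)$\Rightarrow$(2) is immediate because every separable state is PPT, and (2)$\Leftrightarrow$(3) is Hildebrand's theorem quoted above. The content of Theorem~\ref{thm:main} is therefore (3)$\Rightarrow$(1): every $\rho\in M_2\otimes M_n$ whose eigenvalues satisfy $\lambda_1\le \lambda_{2n-1}+2\sqrt{\lambda_{2n-2}\lambda_{2n}}$ is separable, not merely PPT.

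My plan is to construct an explicit separable decomposition of an arbitrary $\rho$ with such a spectrum. The natural first attempt is to peel off the maximally mixed state: write $\rho=2n\lambda_{2n}\cdot(I/(2n))+(1-2n\lambda_{2n})\sigma$, so the identity piece is manifestly separable and the residue $\sigma$ has rank at most $2n-1$. A direct check shows the spectrum of $\sigma$ does not in general satisfy (3) again, so a blind iteration does not close. Instead, one must pair this extraction with the subtraction of carefully chosen rank-one product projections, using the Gurvits--Barnum separable ball around $I/(2n)$ to certify that the combined subtraction stays separable while preserving positivity and the PPT property of the remainder. The spectral hypothesis (3) plays the dual role of controlling how much can be peeled off at each stage and bounding the eigenvalue spread of the residue, and the goal is to drive the rank of the residue small enough that a known low-rank separability result for $M_2\otimes M_n$---such as the theorems of Horodecki--Lewenstein--Vidal--Cirac asserting that PPT states of small rank in $M_2\otimes M_n$ are separable---applies and finishes the argument.

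The main obstacle is tightness: because (3) is already the exact PPT-from-spectrum condition, the decomposition cannot afford any slack. The critical case is $\lambda_{2n}=0$, in which (3) collapses to $\lambda_1=\lambda_{2n-1}$ and $\rho$ is proportional to a rank-$(2n-1)$ projection; the identity-extraction trick yields nothing here, and separability must be proved directly, most naturally by locating a product vector in the one-dimensional kernel of $\rho$ and invoking the range criterion. A limiting argument then propagates separability from this extremal boundary of the region (3) to spectra with $\lambda_{2n}>0$. Threading a construction that respects (3) exactly, rather than some strictly stronger sufficient condition, will be the technical heart of the proof; this is also the place where the qubit--qudit restriction ($m=2$) is essential, since it controls both the partial-transpose spectrum and the available low-rank separability theorems on which the final step depends.
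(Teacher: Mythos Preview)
Your proposal is a genuinely different route from the paper's, but it has a fatal gap in the very case you flag as critical. When $\lambda_{2n}=0$, condition~(3) forces $\lambda_1=\cdots=\lambda_{2n-1}$, so $\rho$ is (a multiple of) a projection onto a $(2n{-}1)$-dimensional subspace. You propose to ``locate a product vector in the one-dimensional kernel of $\rho$ and invoke the range criterion.'' But the kernel is spanned by a \emph{single} vector, which is generically entangled: take $\rho=\tfrac{1}{2n-1}\bigl(I-\ketbra{\psi}{\psi}\bigr)$ with $\ket{\psi}=\tfrac{1}{\sqrt 2}(\ket{00}+\ket{11})$. There is simply no product vector in that kernel, so the step fails outright. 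Moreover, the range criterion is a \emph{necessary} condition for separability, not a sufficient one, so even if you could locate such a vector, ``invoking the range criterion'' would not certify separability. Your subsequent ``limiting argument'' also points the wrong way: closure of the separable set lets you pass from the interior to the boundary, not from the $\lambda_{2n}=0$ boundary outward to $\lambda_{2n}>0$.

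The paper takes an entirely different line that avoids any rank-reduction or kernel analysis. Writing $\rho=\begin{bmatrix}A&B\\B^\dagger&C\end{bmatrix}$, it first proves a block separability criterion (Lemma~\ref{lem:gen_toep_sep}): $\|B\|^2\le\lambda_{\min}(A)\lambda_{\min}(C)$ implies separability. It then shows, via a continuity argument over the one-parameter family $U(t)\in M_2$ of rotations, that a suitable local unitary $U\otimes I$ can always be applied so that the minimal eigenvectors of $A_U,C_U$ align with the singular vectors of $B_U$ through a single unitary $V$ (Lemma~\ref{lem:U1U2_exists}). The key point your plan misses is that the paper does \emph{not} use merely that $\rho$ is PPT; it uses PPT-from-spectrum, applying PPT to $(I\oplus W)^\dagger\rho(I\oplus W)$ for \emph{all} unitary $W\in M_n$, which yields the family of inequalities $\bra{x}A\ket{x}\,\bra{z}C\ket{z}\ge|\bra{z}WBW\ket{x}|^2$. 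Specializing $\ket{x},\ket{z},W$ to the vectors supplied by Lemma~\ref{lem:U1U2_exists} gives exactly $\lambda_{\min}(A)\lambda_{\min}(C)\ge\|B\|^2$, and Lemma~\ref{lem:gen_toep_sep} finishes. Your peeling/low-rank scheme never taps this extra freedom, which is why it cannot close at the tight boundary.
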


The remainder of the paper is devoted to proving Theorem~\ref{thm:main}. Most of the implications of the theorem are already known: as discussed earlier, the equivalence of $(2)$ and $(3)$ was proved in \cite{Hil07} and the implication $(1) \implies (2)$ follows from the fact that all separable states are PPT. Thus we only need to show that $(2) \implies (1)$. We note, however, that the implication $(2) \implies (1)$ is perhaps surprising, as we recall that a state being PPT does \emph{not} imply that it is separable when $n \geq 4$.

In order to show that $(2) \implies (1)$ (and hence prove the theorem), it suffices to prove that if $\rho \in M_2 \otimes M_n$ is PPT from spectrum then it is separable. To see why this suffices, suppose that $\rho \in M_2 \otimes M_n$ is PPT from spectrum but not separable from spectrum. Then there exists a unitary matrix $U \in M_2 \otimes M_n$ such that $U^\dagger\rho U$ (which is also PPT from spectrum) is entangled.

In order to prove that every PPT from spectrum $\rho \in M_2 \otimes M_n$ is separable, we first need to develop some notation and several lemmas. Whenever we use the ``ket'' notation $\ket{v} \in \mathbb{C}^n$, we are implicitly defining $\ket{v}$ to have unit length. Furthermore, it will frequently be useful for us to make the association $M_2 \otimes M_n \cong M_n \oplus M_n$ in the usual way by writing $\rho \in M_2 \otimes M_n$ as the $2\times 2$ block matrix
\begin{align}\label{eq:rho_blocks}
	\rho = \begin{bmatrix}
		A & B \\
		B^\dagger & C
	\end{bmatrix},
\end{align}
where, for example, $A = (\bra{0}\otimes I)\rho(\ket{0}\otimes I)$ (and $\{\ket{0},\ket{1}\}$ denotes the standard basis of $\mathbb{C}^2$). The eigenvalues and eigenvectors of the sub-blocks will be particularly important for us, and we use $\lambda_{\textup{min}}(A)$ and $\lambda_{\textup{min}}(C)$ to denote the minimal eigenvalues of $A$ and $C$, respectively.

Much information about $\rho$ can be obtained by investigating relationships between its sub-blocks. For example, if $A,B,C \in M_n$ are such that $A,C \geq 0$ and $A$ is nonsingular then it is well-known that the block matrix~\eqref{eq:rho_blocks} is positive semidefinite if and only if $C \geq B^\dagger A^{-1}B$ (see, for example, \cite[Appendix~A.5.5]{BV04}). Intuitively, this means that $\rho$ is positive semidefinite if and only if $B$ is sufficiently ``small'' compared to $A$ and $C$. The following lemma shows that if $B$ is even ``smaller'' then $\rho$ is in fact separable.
\begin{lemma}\label{lem:gen_toep_sep}
	Let $A,B,C \in M_n$ be such that $A,C \geq 0$. If $\|B\|^2 \leq \lambda_{\textup{min}}(A)\cdot\lambda_{\textup{min}}(C)$ then the block matrix
	\begin{align*}
		\begin{bmatrix}
			A & B \\
			B^\dagger & C
		\end{bmatrix}
	\end{align*}
	is separable.
\end{lemma}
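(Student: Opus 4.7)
The plan is to reduce to the case where the diagonal blocks are scalar multiples of the identity, decompose the off-diagonal block as a convex combination of unitaries, and handle each resulting summand by a direct spectral argument. First, set $\alpha := \lambda_{\textup{min}}(A)$ and $\gamma := \lambda_{\textup{min}}(C)$, and split
\begin{align*}
\begin{bmatrix} A & B \\ B^\dagger & C \end{bmatrix} = \begin{bmatrix} A - \alpha I & 0 \\ 0 & C - \gamma I \end{bmatrix} + \begin{bmatrix} \alpha I & B \\ B^\dagger & \gamma I \end{bmatrix}.
\end{align*}
The first summand equals $\ketbra{0}{0} \otimes (A - \alpha I) + \ketbra{1}{1} \otimes (C - \gamma I)$, which is separable because its two diagonal blocks are positive semidefinite. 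So it suffices to prove that the second summand is separable whenever $\|B\|^2 \leq \alpha\gamma$; the degenerate cases $\alpha = 0$ or $\gamma = 0$ force $B = 0$ and are immediate, so I will assume $\alpha, \gamma > 0$.

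Next, the hypothesis says that $B/\sqrt{\alpha\gamma}$ is a contraction in $M_n$. A standard fact (a finite-dimensional Russo--Dye-type statement, easily proved via the polar decomposition $B/\sqrt{\alpha\gamma} = VP$ together with the identity $P = \tfrac{1}{2}\bigl((P+i\sqrt{I-P^2}) + (P-i\sqrt{I-P^2})\bigr)$) says that every such contraction is a convex combination of unitaries. So I would write $B = \sqrt{\alpha\gamma}\sum_k p_k U_k$ with $p_k \geq 0$, $\sum_k p_k = 1$, and each $U_k \in M_n$ unitary, and use $\alpha I = \sum_k p_k\,\alpha I$ and $\gamma I = \sum_k p_k\,\gamma I$ to obtain the convex decomposition
\begin{align*}
\begin{bmatrix} \alpha I & B \\ B^\dagger & \gamma I \end{bmatrix} = \sum_k p_k \begin{bmatrix} \alpha I & \sqrt{\alpha\gamma}\,U_k \\ \sqrt{\alpha\gamma}\,U_k^\dagger & \gamma I \end{bmatrix}.
\end{align*}
It then suffices to prove separability of each summand.

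Finally, for each $k$ I would diagonalize the unitary (hence normal) matrix $U_k = \sum_j d_j \ketbra{f_j}{f_j}$ in an orthonormal basis $\{\ket{f_j}\}$ with $|d_j| = 1$, and substitute $I = \sum_j \ketbra{f_j}{f_j}$ into the block matrix to obtain
\begin{align*}
\begin{bmatrix} \alpha I & \sqrt{\alpha\gamma}\,U_k \\ \sqrt{\alpha\gamma}\,U_k^\dagger & \gamma I \end{bmatrix} = \sum_j \begin{bmatrix} \alpha & \sqrt{\alpha\gamma}\,d_j \\ \sqrt{\alpha\gamma}\,\bar d_j & \gamma \end{bmatrix} \otimes \ketbra{f_j}{f_j}.
\end{align*}
Each $2 \times 2$ matrix on the right has determinant $\alpha\gamma(1-|d_j|^2) = 0$ and hence equals $\ketbra{y_j}{y_j}$ for $\ket{y_j} = \sqrt{\alpha}\,\ket{0} + \sqrt{\gamma}\,\bar d_j\,\ket{1}$, so each summand is a manifestly separable sum of product positive operators, which finishes the proof. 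The only ingredient requiring any real thought is the contraction-as-convex-combination-of-unitaries fact, but it has a short polar-decomposition proof; every other step is direct manipulation of block matrices and I do not anticipate further obstacles.
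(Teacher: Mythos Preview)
Your proof is correct and shares the paper's first move exactly: both split off $\ketbra{0}{0}\otimes(A-\alpha I)+\ketbra{1}{1}\otimes(C-\gamma I)$ and reduce to showing separability of $\begin{bmatrix}\alpha I & B\\ B^\dagger & \gamma I\end{bmatrix}$ when $\|B\|^2\le\alpha\gamma$. The difference is only in how that reduced block is handled. The paper conjugates by the local invertible $D=\diag(1/\sqrt{\alpha},1/\sqrt{\gamma})$ to put identities on the diagonal and then cites Gurvits--Barnum \cite{GB02} for the separability of $\begin{bmatrix} I & X\\ X^\dagger & I\end{bmatrix}$ with $\|X\|\le1$. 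You instead supply a self-contained argument for that very fact: write the contraction as an average of unitaries via the polar decomposition, then diagonalize each unitary to exhibit an explicit product decomposition. Your route is slightly longer but avoids the external citation and yields an explicit separable decomposition; the paper's route is shorter on the page but black-boxes the same content. Substantively the two proofs coincide.
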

\begin{proof}
	The result is trivial if $\lambda_{\textup{min}}(A) = 0$ or $\lambda_{\textup{min}}(C) = 0$, so we assume from now on that $A$ and $C$ are both nonsingular.
	
	Note that every $A,C \geq 0$ can be written in the form $A = \lambda_{\textup{min}}(A)I + A^\prime$ and $C = \lambda_{\textup{min}}(C)I + C^\prime$ for some $A^\prime, C^\prime \geq 0$. Then
	\begin{align*}
		\begin{bmatrix}
			A & B \\
			B^\dagger & C
		\end{bmatrix} = \begin{bmatrix}
			\lambda_{\textup{min}}(A)I & B \\
			B^\dagger & \lambda_{\textup{min}}(C)I
		\end{bmatrix} + \begin{bmatrix}
			A^\prime & 0 \\
			0 & C^\prime
		\end{bmatrix}.
	\end{align*}
	
	Since ${\rm diag}(A^\prime,C^\prime) = \ketbra{0}{0} \otimes A^\prime + \ketbra{1}{1} \otimes C^\prime$, which is separable, it suffices to prove that
	\begin{align}\label{eq:lam_on_diag}
		\begin{bmatrix}
			\lambda_{\textup{min}}(A)I & B \\
			B^\dagger & \lambda_{\textup{min}}(C)I
		\end{bmatrix}
	\end{align}
	is separable. If $D \in M_2$ is nonsingular then the transformation $\rho \mapsto (D \otimes I)^\dagger \rho (D \otimes I)$ does not affect separability. It follows that if we define $D = {\rm diag}(1/\sqrt{\lambda_{\textup{min}}(A)},1/\sqrt{\lambda_{\textup{min}}(C)})$ then~\eqref{eq:lam_on_diag} is separable if and only if
	\begin{align}\label{eq:block_id}
		\begin{bmatrix}
			I & B/\sqrt{\lambda_{\textup{min}}(A)\lambda_{\textup{min}}(C)} \\
			B^\dagger/\sqrt{\lambda_{\textup{min}}(A)\lambda_{\textup{min}}(C)} & I
		\end{bmatrix}
	\end{align}
	is separable. The result now follows from \cite[Proposition~1]{GB02}, which says that the block matrix~\eqref{eq:block_id} is separable if $\big\|B/\sqrt{\lambda_{\textup{min}}(A)\lambda_{\textup{min}}(C)}\big\| \leq 1$.
\end{proof}

Now that we have Lemma~\ref{lem:gen_toep_sep} to work with, our goal becomes relatively clear. Since we want to show that every PPT from spectrum $\rho \in M_2 \otimes M_n$ is separable, it would suffice to show that every such $\rho$ satisfies the hypotheses of Lemma~\ref{lem:gen_toep_sep}. However, this conjecture is false even in the $n = 2$ case, as demonstrated by the following state $\rho \in M_2 \otimes M_2$:
\begin{align*}
	\rho = \frac{1}{11}\left[\begin{array}{cc|cc}
		1 & 0 & 0 & 0 \\
		0 & 3 & 2 & 0 \\ \hline
		0 & 2 & 3 & 0 \\
		0 & 0 & 0 & 4
	\end{array}\right].
\end{align*}
Indeed, it is straightforward to verify that $\rho$ is PPT from spectrum, since its eigenvalues satisfy $5/11 = \lambda_1 \leq \lambda_3 + 2\sqrt{\lambda_2 \lambda_4} = 5/11$, yet Lemma~\ref{lem:gen_toep_sep} does not apply to $\rho$ since $4/121 = \|B\|^2 > \lambda_{\textup{min}}(A)\cdot\lambda_{\textup{min}}(C) = 3/121$.

However, if we let $U$ be the unitary matrix
\begin{align*}
	U = \frac{1}{\sqrt{2}}\begin{bmatrix}
		1 & 1 \\ 1 & -1
	\end{bmatrix}
\end{align*}
then we have
\begin{align*}
	(U \otimes I)^\dagger \rho (U \otimes I) = \frac{1}{22}\left[\begin{array}{cc|cc}
		4 & 2 & -2 & 2 \\
		2 & 7 & -2 & -1 \\ \hline
		-2 & -2 & 4 & -2 \\
		2 & -1 & -2 & 7
	\end{array}\right],
\end{align*}
which \emph{does} satisfy the hypotheses of Lemma~\ref{lem:gen_toep_sep}, since $9/484 = \|B\|^2 \leq \lambda_{\textup{min}}(A)\cdot\lambda_{\textup{min}}(C) = 9/484$. It follows that $(U \otimes I)^\dagger \rho (U \otimes I)$ is separable, so $\rho$ is separable as well.

The above example inspires the proof of Theorem~\ref{thm:main}---our goal is to show that, for every PPT from spectrum $\rho \in M_2 \otimes M_n$, there exists a unitary matrix $U \in M_2$ (depending on $\rho$) such that $(U \otimes I)^\dagger \rho (U \otimes I)$ satisfies the hypotheses of Lemma~\ref{lem:gen_toep_sep}. The following lemmas are used to show the existence of such a unitary matrix.

\begin{lemma}\label{lem:U1U2_exists}
	Let $\rho \in M_2 \otimes M_n$. Given a unitary matrix $U \in M_2$, define $A_{U},B_{U},C_{U} \in M_n$ by
	\begin{align*}
		\begin{bmatrix}
			A_{U} & B_{U} \\ B_{U}^\dagger & C_{U}
		\end{bmatrix} = (U \otimes I)^\dagger\rho(U \otimes I).
	\end{align*}
	There exist unitary matrices $U \in M_2$, $V \in M_n$ and eigenvectors $\ket{a_{\textup{min}}},\ket{c_{\textup{min}}} \in \mathbb{C}^n$ corresponding to the minimal eigenvalues of $A_U$ and $C_U$, respectively, such that
	\begin{align*}
		| \bra{c_{\textup{min}}}VB_{U}V\ket{a_{\textup{min}}} | = \|B_{U}\|.
	\end{align*}
\end{lemma}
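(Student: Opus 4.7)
For any unit vectors $\ket{a},\ket{c}\in\C^n$ and any unitary $V\in M_n$, the variational definition of the operator norm gives $|\bra{c}VB_UV\ket{a}|\leq\|VB_UV\|=\|B_U\|$, with equality iff $\ket{c}$ and $\ket{a}$ are top left and right singular vectors of $VB_UV$ with compatible relative phase. Writing a singular value decomposition $B_U=P\Sigma Q^\dagger$ with top left/right singular vectors $\ket{p_1},\ket{q_1}$, one has $VB_UV=(VP)\Sigma(V^\dagger Q)^\dagger$, whose top singular vectors are $V\ket{p_1}$ and $V^\dagger\ket{q_1}$. So producing the required $V$ amounts to finding a unitary with $V\ket{p_1}\propto\ket{c_{\textup{min}}}$ and $V\ket{a_{\textup{min}}}\propto\ket{q_1}$; since $V$ preserves inner products (and a free global phase lets us pin down the proportionality constants) such a $V$ exists exactly when
\begin{equation*}
|\langle p_1|a_{\textup{min}}\rangle|=|\langle q_1|c_{\textup{min}}\rangle|.
\end{equation*}

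The task therefore reduces to producing $U\in M_2$ for which this scalar equality holds. I would parameterize a continuous path $U_\theta=\bigl(\begin{smallmatrix}\cos\theta & -\sin\theta\\ \sin\theta & \cos\theta\end{smallmatrix}\bigr)$ for $\theta\in[0,\pi/2]$; a direct block computation gives $A_0=A$, $B_0=B$, $C_0=C$, and at the other endpoint $A_{\pi/2}=C$, $C_{\pi/2}=A$, $B_{\pi/2}=-B^\dagger$. The SVD of $-B^\dagger$ is obtained from that of $B$ by interchanging $P$ and $Q$ (up to a sign absorbed into the unitary factor), and simultaneously the minimum-eigenvectors of $A_\theta$ and $C_\theta$ are swapped at $\theta=\pi/2$. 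Consequently the function
\begin{equation*}
g(\theta)\defeq|\langle p_1(\theta)|a_{\textup{min}}(\theta)\rangle|^2-|\langle q_1(\theta)|c_{\textup{min}}(\theta)\rangle|^2
\end{equation*}
satisfies $g(\pi/2)=-g(0)$, so whenever $g$ is continuous on $[0,\pi/2]$ the intermediate value theorem yields $\theta^\ast$ with $g(\theta^\ast)=0$, and we are done.

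The main obstacle is that $g$ need not be continuous at values of $\theta$ where the smallest eigenvalue of $A_\theta$ or $C_\theta$, or the largest singular value of $B_\theta$, has multiplicity greater than one, since at such points the relevant eigenvectors and singular vectors are not canonically determined and branches may jump. I would dispose of this in one of two ways: (i) exploit the degeneracy directly, noting that when the top singular value of $B_\theta$ or the minimum eigenvalue of $A_\theta,C_\theta$ has a higher-dimensional eigenspace, the scalar equality can be forced by choosing the singular/eigenvectors within that subspace; or (ii) apply a perturbation argument, replacing $\rho$ by a nearby generic state for which simplicity holds along the whole path, applying the IVT there, and extracting a convergent limit $U_\epsilon\to U$, $V_\epsilon\to V$ and min-eigenvectors using compactness of $U(2)$, $U(n)$ and the unit sphere in $\C^n$. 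The continuous-branch issue is really the only non-routine step; the reduction and the IVT itself are essentially forced by the structure of the problem.
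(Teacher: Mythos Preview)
Your proposal is correct and follows essentially the same route as the paper: the same reduction to the scalar equality $|\langle p_1|a_{\textup{min}}\rangle|=|\langle q_1|c_{\textup{min}}\rangle|$, the same one-parameter rotation family, and the same endpoint-swap observation $g(\pi/2)=-g(0)$ feeding an IVT argument. The only point worth flagging is the degeneracy handling: the paper implements your option~(i), but not by forcing $g=0$ at a degenerate point; instead it replaces $g$ by the set-valued map $f(t)$ collecting \emph{all} achievable values over all admissible eigen- and singular-vector choices, proves each $f(t)$ is a closed interval, and then uses Kato's analytic perturbation theory to show that $\bigcup_{t}f(t)$ is itself an interval---so the IVT conclusion survives even though no single continuous branch of $g$ need exist.
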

\begin{proof}
	Our goal is to show that there exist $U \in M_2$, $V \in M_n$ such that
	\begin{align}\label{eq:want_V1}
		V\ket{a_{\textup{min}}} & = \ket{b_r}, \\ \label{eq:want_V2}
		\bra{c_{\textup{min}}}V & = \bra{b_\ell},
	\end{align}
	for some $\ket{b_\ell},\ket{b_r} \in \mathbb{C}^n$ with $|\bra{b_\ell}B_U\ket{b_r}| = \|B_U\|$, where we note that Equation~\eqref{eq:want_V2} is equivalent to $V\ket{b_\ell} = \ket{c_{\textup{min}}}$. It is then an elementary linear algebra fact that there exists a unitary matrix $V$ satisfying Equations~\eqref{eq:want_V1} and~\eqref{eq:want_V2} if and only if $\braket{a_{\textup{min}}}{b_\ell} = \braket{b_r}{c_{\textup{min}}}$.
	
	Thus we wish to show that there exists a unitary matrix $U \in M_2$ such that $\braket{a_{\textup{min}}}{b_\ell} = \braket{b_r}{c_{\textup{min}}}$ for some eigenvectors $\ket{a_{\textup{min}}}$ and $\ket{c_{\textup{min}}}$ corresponding to minimal eigenvalues of $A_U$ and $C_U$, respectively, and some left and right norm-attaining vectors $\ket{b_\ell}$ and $\ket{b_r}$ of $B_U$.
	
	In order to show that such a $U$ exists, we first note that it suffices to show that we can find $\ket{a_{\textup{min}}},\ket{b_\ell},\ket{b_r},\ket{c_{\textup{min}}}$ so that $|\braket{a_{\textup{min}}}{b_\ell}| = |\braket{b_r}{c_{\textup{min}}}|$. To see this, simply note that if $\ket{a_{\textup{min}}}$ is an eigenvector corresponding to the minimal eigenvalue of $A_U$, then so is $e^{i\theta}\ket{a_{\textup{min}}}$ for all $\theta \in \mathbb{R}$.
	
	Now define a $1$-parameter family of unitary matrices $U(t) \in M_2$ for $t \in \mathbb{R}$ by
	\begin{align*}
		U(t) \defeq \begin{bmatrix}
			\cos(\pi t/2) & -\sin(\pi t/2) \\ \sin(\pi t/2) & \cos(\pi t/2)
		\end{bmatrix}.
	\end{align*}
	
	Also define the function $f : \mathbb{R} \rightarrow \mathcal{P}(\mathbb{R})$, where $\mathcal{P}(\cdot)$ denotes the power set (i.e., $f$ maps real numbers to sets of real numbers), by
	\begin{align}
		\begin{split}\label{def:f}
		f(t) \defeq \big\{ & |\braket{a^{(t)}_{\textup{min}}}{b^{(t)}_\ell}| - |\braket{b^{(t)}_r}{c^{(t)}_{\textup{min}}}| : \\
		& \bra{a^{(t)}_{\textup{min}}}A_{U(t)}\ket{a^{(t)}_{\textup{min}}} = \lambda_{\textup{min}}(A_{U(t)}), \\
		& \bra{c^{(t)}_{\textup{min}}}C_{U(t)}\ket{c^{(t)}_{\textup{min}}} = \lambda_{\textup{min}}(C_{U(t)}), \\
		& \big|\bra{b^{(t)}_{\ell}}B_{U(t)}\ket{b^{(t)}_{r}}\big| = \|B_{U(t)}\| \big\}.
		\end{split}
	\end{align}
	Note that if the minimal eigenvectors $\ket{a^{(t)}_{\textup{min}}}$ and $\ket{c^{(t)}_{\textup{min}}}$ and the norm-attaining vectors $\ket{b^{(t)}_{\ell}}$ and $\ket{b^{(t)}_{r}}$ are unique then $f(t)$ is just a singleton set. However, if the eigenvalues of $A_{U(t)}$ or $C_{U(t)}$ or the singular values of $B_{U(t)}$ are degenerate then $f(t)$ can be a set containing infinitely many values.
	
	It suffices to show that $0 \in f(t)$ for some $t \in \mathbb{R}$. To this end, we make use of Lemmas~\ref{lem:f_int1} and~\ref{lem:f_int2} (to be proved later), which tell us that $f$ is sufficiently ``well-behaved'' for our purposes.
	
	We begin by considering the $t = 0$ case, which gives $U(0) = I$. If $0 \in f(0)$ then we are done, so assume that $0 \notin f(0)$. By Lemma~\ref{lem:f_int1} we know that either $x > 0$ for all $x \in f(0)$ or $x < 0$ for all $x \in f(0)$. We assume without loss of generality that $x > 0$ for all $x \in f(0)$ since the other case is completely analogous.
	
	Next, we consider the $t = 1$ case and observe from direct computation that
	\begin{align*}
		\begin{bmatrix}
			A_{U(1)} & B_{U(1)} \\
			B_{U(1)}^\dagger & C_{U(1)}
		\end{bmatrix} = \begin{bmatrix}
			C_{U(0)} & -B_{U(0)}^\dagger \\
			-B_{U(0)} & A_{U(0)}
		\end{bmatrix}.
	\end{align*}
	It follows that $\ket{a^{(0)}_{\textup{min}}}$ is an eigenvector corresponding to the minimal eigenvalue of $A_{U(0)}$ if and only if it is an eigenvector corresponding to the minimal eigenvalue of $C_{U(1)}$ (and similar statements hold for $\ket{c^{(0)}_{\textup{min}}}$, $\ket{b^{(0)}_{\ell}}$, and $\ket{b^{(0)}_{r}}$). It follows that, for all minimal eigenvectors $\ket{a^{(1)}_{\textup{min}}}$ and $\ket{c^{(1)}_{\textup{min}}}$ of $A_{U(1)}$ and $C_{U(1)}$ and all left and right norm-attaining vectors $\ket{b^{(1)}_\ell}$ and $\ket{b^{(1)}_r}$ of $B_{U(1)}$, there exist minimal eigenvectors $\ket{a^{(0)}_{\textup{min}}}$ and $\ket{c^{(0)}_{\textup{min}}}$ of $A_{U(0)}$ and $C_{U(0)}$ and left and right norm-attaining vectors $\ket{b^{(0)}_\ell}$ and $\ket{b^{(0)}_r}$ of $B_{U(0)}$ such that
	\begin{align*}
		|\braket{a^{(1)}_{\textup{min}}}{b^{(1)}_\ell}| - |\braket{b^{(1)}_r}{c^{(1)}_{\textup{min}}}| & = |\braket{c^{(0)}_{\textup{min}}}{b^{(0)}_r}| - |\braket{b^{(0)}_\ell}{a^{(0)}_{\textup{min}}}| \\
		& = -\big(|\braket{a^{(0)}_{\textup{min}}}{b^{(0)}_\ell}| - |\braket{b^{(0)}_r}{c^{(0)}_{\textup{min}}}|\big),
	\end{align*}
	and vice-versa. It follows that $f(1) = -f(0)$ (and in particular, $x < 0$ for all $x \in f(1)$).
	
	Our goal now is a continuity-type result that lets us say that there exists $t^\prime \in (0,1)$ such that $0 \in f(t^\prime)$. Lemma~\ref{lem:f_int2} shows that $\{x \in \mathbb{R} : x \in f(t) \text{ for some } 0 \leq t \leq 1 \}$ is an interval, so such a $t^\prime$ must indeed exist, which completes the proof.
\end{proof}

Throughout the proof of Lemma~\ref{lem:U1U2_exists}, we made use of two auxiliary lemmas that showed that the function $f$ defined by~\eqref{def:f} behaves ``nicely''. We now state and prove these lemmas.

\begin{lemma}\label{lem:f_int1}
	Let $f$ be the function defined by~\eqref{def:f}. For all $t \in \mathbb{R}$, $f(t)$ is a closed, bounded interval.
\end{lemma}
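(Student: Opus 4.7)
The plan is to exhibit $f(t)$ as the continuous image of a compact, path-connected set, and then to invoke the standard topological fact that the compact, connected subsets of $\mathbb{R}$ are precisely the closed, bounded intervals.

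To that end, fix $t \in \mathbb{R}$ and abbreviate $A = A_{U(t)}$, $B = B_{U(t)}$, $C = C_{U(t)}$. Let $\mathcal{E}_A \subseteq \mathbb{C}^n$ denote the eigenspace of $A$ corresponding to $\lambda_{\textup{min}}(A)$ and define $\mathcal{E}_C$ analogously. Their unit spheres $S_A$ and $S_C$ are compact and path-connected, being unit spheres in nonzero complex subspaces. Next I would describe the set $N_B$ of norm-attaining pairs $(\ket{b_\ell},\ket{b_r})$ for $B$: if $B = 0$ then $N_B$ is simply the product of two unit spheres of $\mathbb{C}^n$; otherwise, applying Cauchy--Schwarz to the singular value decomposition of $B$ forces $\ket{b_r}$ to lie on the unit sphere of the right singular subspace of $B$ corresponding to $\|B\|$, and forces $\ket{b_\ell} = e^{i\phi}B\ket{b_r}/\|B\|$ for some phase $\phi \in [0, 2\pi)$. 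In either case $N_B$ is the continuous image of a product of spheres and a circle, hence compact and path-connected.

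The Cartesian product $S_A \times S_C \times N_B$ is therefore compact and path-connected, and the map
\begin{align*}
	(\ket{a},\ket{c},\ket{b_\ell},\ket{b_r}) \mapsto |\braket{a}{b_\ell}| - |\braket{b_r}{c}|
\end{align*}
is continuous on it. By inspection of~\eqref{def:f}, its image is precisely $f(t)$, so $f(t)$ is a compact, connected subset of $\mathbb{R}$, and hence a closed, bounded interval. The only delicate point is the parametrization of $N_B$ when the top singular value of $B$ is degenerate (so that the relevant right singular subspace has dimension greater than one), but this is handled routinely by the singular value decomposition and I do not foresee any substantive obstacle.
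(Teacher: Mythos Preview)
Your argument is correct and follows the same overall strategy as the paper: both show that $f(t)$ is the continuous image of a compact, connected domain and conclude that it is a closed, bounded interval. The paper establishes connectedness by explicitly interpolating linearly between two choices of $(\ket{a},\ket{b_\ell},\ket{b_r},\ket{c})$ and invoking the intermediate value theorem, whereas you argue abstractly that the domain $S_A \times S_C \times N_B$ is path-connected. Your parametrization of the norm-attaining pairs via $\ket{b_\ell} = e^{i\phi}B\ket{b_r}/\|B\|$ is in fact cleaner than the paper's componentwise interpolation of $\ket{b_{\ell,s}}$ and $\ket{b_{r,s}}$: when the top singular value of $B$ is degenerate, independently interpolated left and right singular vectors need not form a norm-attaining \emph{pair} (i.e., need not satisfy $|\bra{b_{\ell,s}}B\ket{b_{r,s}}| = \|B\|$), so your coupling avoids a subtlety that the paper's argument glosses over.
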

\begin{proof}
The set of (unit-length) eigenvectors corresponding to the minimal eigenvalue of $A_{U(t)}$ is a compact set, and similarly for $C_{U(t)}$ and the unit-length norm-attaining vectors of $B_{U(t)}$. Since the function $g(\ket{a},\ket{b_\ell},\ket{b_r},\ket{c}) := |\braket{a}{b_\ell}| - |\braket{b_r}{c}|$ is continuous in $\ket{a},\ket{b_\ell},\ket{b_r},$ and $\ket{c}$, and the image of a compact set under a continuous function is again compact, it follows that $f(t)$ is compact (and hence closed and bounded).

To see that $f(t)$ is an interval, we fix (not necessarily distinct) eigenvectors $\ket{a_0},\ket{a_1}$ corresponding to the minimal eigenvalue of $A_{U(t)}$, eigenvectors $\ket{c_0},\ket{c_1}$ corresponding to the minimal eigenvalue of $C_{U(t)}$, and left and right norm-attaining vectors $\ket{b_{\ell,0}},\ket{b_{\ell,1}}$ and $\ket{b_{r,0}},\ket{b_{r,1}}$ of $B_{U(t)}$. Now let $0 \leq s \leq 1$ and define $\ket{a_s}$ by
\begin{align*}
	\ket{a_s} \defeq \tfrac{1}{N_s}((1-s)\ket{a_0} + s\ket{a_1}),
\end{align*}
where $N_s$ is an appropriate normalization constant (and we define $\ket{b_{\ell,s}},\ket{b_{r,s}},\ket{c_s}$ analogously).

For all $0 \leq s \leq 1$, each of $\ket{a_s},\ket{b_{\ell,s}},\ket{b_{r,s}},$ and $\ket{c_s}$ are either norm-attaining vectors of $B_{U(t)}$ or eigenvectors corresponding to the minimum eigenvalue of $A_{U(t)}$ or $C_{U(t)}$, as appropriate. Furthermore, they are each continuous functions of $s$, so the function $h : \mathbb{R} \rightarrow \mathbb{R}$ defined by
\begin{align*}
	h(s) \defeq |\braket{a_s}{b_{\ell,s}}| - |\braket{b_{r,s}}{c_s}|
\end{align*}
is continuous on the interval $0 \leq s \leq 1$ as well. Since $h(0),h(1) \in f(t)$, it follows from the intermediate value theorem that $[h(0),h(1)] \subseteq f(t)$. Since $h(0),h(1) \in f(t)$ were chosen arbitrarily, this completes the proof.
\end{proof}

We now prove a souped-up version of Lemma~\ref{lem:f_int1} that says that, in a sense, the interval $f(t)$ varies smoothly with $t$. The proof of this lemma relies on eigenvector pertubation results, and hence it is useful for us to recall that, for every $B \in M_n$, the pure states $\ket{b_\ell},\ket{b_r}$ are such that $|\bra{b_\ell}B\ket{b_r}| = \|B\|$ if and only if $\ket{b_\ell}$ is an eigenvector corresponding to the maximal eigenvalue of $BB^\dagger$ and $\ket{b_r}$ is an eigenvector corresponding to the maximal eigenvalue of $B^\dagger B$.

\begin{lemma}\label{lem:f_int2}
	Let $f$ be the function defined by~\eqref{def:f}. For all real numbers $t_0 \leq t_1$, the set
	\begin{align*}
		\{x \in \mathbb{R} : x \in f(t) \text{ for some } t_0 \leq t \leq t_1\}
	\end{align*}
	is an interval.
\end{lemma}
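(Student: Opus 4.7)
The plan is to prove that the correspondence $f$ is \emph{upper hemicontinuous} on $[t_0,t_1]$: whenever $t_n \to t^*$ and $x_n \in f(t_n)$ with $x_n \to x^*$, one has $x^* \in f(t^*)$. Combined with Lemma~\ref{lem:f_int1}, which says each $f(t)$ is a closed interval $[m(t),M(t)]$, upper hemicontinuity will let me rule out any gap in the union $S \defeq \{x \in \R : x \in f(t) \text{ for some } t_0 \leq t \leq t_1\}$.

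First I would establish upper hemicontinuity. Given $t_n \to t^*$ and $x_n \in f(t_n)$ with $x_n \to x^*$, each $x_n$ is witnessed by unit vectors $\ket{a_n},\ket{b_{\ell,n}},\ket{b_{r,n}},\ket{c_n}$ satisfying the conditions of~\eqref{def:f}. By compactness of the unit sphere in $\C^n$, I pass to a subsequence along which all four vectors converge, to $\ket{a^*},\ket{b_\ell^*},\ket{b_r^*},\ket{c^*}$. Because $U(t)$ is analytic in $t$, the matrices $A_{U(t)},B_{U(t)},C_{U(t)}$ and the scalars $\lambda_{\textup{min}}(A_{U(t)}), \lambda_{\textup{min}}(C_{U(t)}), \|B_{U(t)}\|$ vary continuously in $t$. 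Taking limits in $A_{U(t_n)}\ket{a_n} = \lambda_{\textup{min}}(A_{U(t_n)})\ket{a_n}$ then yields $A_{U(t^*)}\ket{a^*} = \lambda_{\textup{min}}(A_{U(t^*)})\ket{a^*}$, so $\ket{a^*}$ is a minimum eigenvector of $A_{U(t^*)}$, and analogously for $\ket{c^*}$. For the norm-attaining vectors of $B_{U(t_n)}$, the same argument applied to the eigenvalue problems for $B_{U(t_n)}B_{U(t_n)}^\dagger$ and $B_{U(t_n)}^\dagger B_{U(t_n)}$, as recalled in the paragraph preceding the lemma, shows that $\ket{b_\ell^*},\ket{b_r^*}$ are left and right norm-attaining vectors for $B_{U(t^*)}$. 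Continuity of $g(\ket{a},\ket{b_\ell},\ket{b_r},\ket{c}) = |\braket{a}{b_\ell}| - |\braket{b_r}{c}|$ then gives $x^* \in f(t^*)$.

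Finally, I assume for contradiction that $S$ is not an interval. Upper hemicontinuity together with the compactness of $[t_0,t_1]$ forces $S$ to be closed in $\R$ (and it is bounded since each term in $g$ is bounded by $1$), so there exist $a < b$ with $a,b \in S$ and $(a,b) \cap S = \emptyset$. Define $T^- \defeq \{t \in [t_0,t_1] : m(t) \leq a\}$ and $T^+ \defeq \{t \in [t_0,t_1] : M(t) \geq b\}$. Because $f(t) = [m(t),M(t)]$ is an interval disjoint from $(a,b)$, every $t$ lies in one of $T^-$ or $T^+$, so $T^- \cup T^+ = [t_0,t_1]$; both are nonempty because $a,b \in S$. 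A near-verbatim repetition of the convergence argument above (pick witnesses $x_n \in f(t_n)$ with $x_n \leq a$ or $x_n \geq b$, extract a convergent subsequence, apply upper hemicontinuity) shows $T^-$ and $T^+$ are closed. Connectedness of $[t_0,t_1]$ then forces $T^- \cap T^+ \neq \emptyset$; any $t^*$ in the intersection satisfies $m(t^*) \leq a < b \leq M(t^*)$, whence $[a,b] \subseteq f(t^*) \subseteq S$, contradicting $(a,b) \cap S = \emptyset$. The main obstacle is the upper hemicontinuity step, where the ``eigenvector perturbation'' flavor advertised in the lead-in enters; happily it reduces to the standard continuity of eigenvalues of Hermitian matrices (applied to $A_{U(t)}, C_{U(t)}, B_{U(t)}B_{U(t)}^\dagger$, and $B_{U(t)}^\dagger B_{U(t)}$) rather than any deeper Rellich-Kato machinery.
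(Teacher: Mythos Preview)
Your argument is correct and takes a genuinely different route from the paper. The paper invokes Rellich--Kato analytic perturbation theory: it uses analyticity of $A_{U(t)}$, $B_{U(t)}B_{U(t)}^\dagger$, $B_{U(t)}^\dagger B_{U(t)}$, $C_{U(t)}$ to obtain globally analytic eigenvalue/eigenvector branches, notes that away from finitely many \emph{exceptional points} the minimal (resp.\ maximal) eigenvectors vary continuously so the endpoints of $f(t)$ do too, and then argues separately that at each exceptional point $t_*$ the one-sided limit $\lim_{t\to t_*^-} f(t)$ meets $f(t_*)$, so no gap can open up. Your proof replaces all of this with a single soft fact---upper hemicontinuity of the correspondence $t\mapsto f(t)$, which needs only compactness of the unit sphere and continuity of the minimum eigenvalue and operator norm---followed by a clean connectedness argument on $[t_0,t_1]$ via the closed cover $T^-\cup T^+$. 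This is strictly more elementary: as you note, it avoids the Rellich--Kato machinery entirely, and it also sidesteps the paper's somewhat delicate bookkeeping at exceptional points. What the paper's approach buys is finer structural information (analytic eigenvector curves, a finite exceptional set) that is not actually needed for the lemma; for the purpose at hand your argument is the shorter and more robust one.
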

\begin{proof}
	We first observe that each of $A_{U(t)},B_{U(t)}B_{U(t)}^\dagger,B_{U(t)}^\dagger B_{U(t)}$, and $C_{U(t)}$ are analytic functions of $t$, since they are defined as products and sums of terms only involving $\cos(\pi t/2)$ and $\sin(\pi t/2)$. It follows from standard eigenvector perturbation results (see \cite[Sections~1 and~6]{Kat95}, for example) that there exist vector analytic functions of $t$
	\begin{align*}
		\ket{a_i^{(t)}}, \ket{b_{\ell,i}^{(t)}}, \ket{b_{r,i}^{(t)}}, \ket{c_i^{(t)}} \quad \text{for} \quad 1 \leq i \leq n
	\end{align*}
	that form orthonormal bases of eigenvectors of $A_{U(t)},B_{U(t)}B_{U(t)}^\dagger,B_{U(t)}^\dagger B_{U(t)}$, and $C_{U(t)}$, respectively, with corresponding eigenvalues given by the scalar analytic functions
	\begin{align*}
		\alpha_i(t), \beta_{\ell,i}(t), \beta_{r,i}(t), \gamma_i(t) \quad \text{for} \quad 1 \leq i \leq n
	\end{align*}
	 for all $t \in \mathbb{R}$. Furthermore, it is known that there exist positive integers $\alpha$, $\beta_\ell$, $\beta_r$, $\gamma$ such that, for almost all $t \in \mathbb{R}$, $A_{U(t)},B_{U(t)}B_{U(t)}^\dagger,B_{U(t)}^\dagger B_{U(t)}$, and $C_{U(t)}$ have $\alpha$, $\beta_\ell$, $\beta_r$, and $\gamma$ distinct eigenvalues, respectively. More precisely, in every compact interval there are only finitely many values of $t$ such that $A_{U(t)}$ does not have exactly $\alpha$ distinct eigenvalues (and similarly for $B_{U(t)}B_{U(t)}^\dagger,B_{U(t)}^\dagger B_{U(t)}$, and $C_{U(t)}$). Such points are called \emph{exceptional points}, and they are isolated (i.e., around any exceptional point, there is an open interval containing no other exceptional points).
	
	This immediately tells us that in any interval $(x,y)$ that does not contain any exceptional points, the eigenvectors corresponding to the minimal eigenvalue of $A_{U(t)}$ are analytic (and hence continuous) functions of $t$, and similarly for $C_{U(t)}$ and the eigenvectors corresponding to the maximal eigenvalues of $B_{U(t)}B_{U(t)}^\dagger$ and $B_{U(t)}^\dagger B_{U(t)}$. Thus each endpoint of $f(t)$ varies continuously for $t \in (x,y)$, so we only need to consider what happens at exceptional points.
	
	Let $t = t_*$ be an exceptional point for $A_{U(t)}$ (the exact same argument works if $t_*$ is an exceptional point for $B_{U(t)}B_{U(t)}^\dagger,B_{U(t)}^\dagger B_{U(t)}$, or $C_{U(t)}$, so we only consider this one case). Since exceptional points are isolated, it follows that there is an interval $[\tilde{t},t_*)$ that contains no exceptional points, so $f(t)$ varies continuously on that interval. In particular, this means that $\lim_{t\rightarrow t_*^-} f(t)$ exists and is a closed and bounded interval, where we clarify that $\lim_{t\rightarrow t_*^-}$ refers to the left-sided limit. Note however that $\lim_{t\rightarrow t_*^-} f(t)$ does \emph{not} necessarily equal $f(t_*)$, but it is sufficient for our purposes to show that their intersection is non-empty.
	
	To this end, note that it is straightforward to show that if $\ket{a^{(t)}}$ is an eigenvector corresponding to the minimal eigenvalue of $A_{U(t)}$ for $t \in [\tilde{t},t_*)$ then $\lim_{t\rightarrow t_*^-} \ket{a^{(t)}}$ is an eigenvector corresponding to the minimal eigenvalue of $A_{U(t_*)}$. Similar statements also hold for the eigenvectors of $B_{U(t_*)}B_{U(t_*)}^\dagger,B_{U(t_*)}^\dagger B_{U(t_*)}$, and $C_{U(t_*)}$. Thus
	\begin{align}\label{eq:non_empty_int}
		\big(\lim_{t\rightarrow t_*^-} f(t)\big) \cap f(t_*) \neq \emptyset.
	\end{align}
	Since $f(t_*)$ and $\{x \in \mathbb{R} : x \in f(t) \text{ for some } \tilde{t} \leq t < t_*)\}$ are both intervals, and~\eqref{eq:non_empty_int} says that there is no gap between them, it follows that
	\begin{align*}
		\{x \in \mathbb{R} : x \in f(t) \text{ for some } \tilde{t} \leq t \leq t_*)\}
	\end{align*}
	is an interval as well. Repeating this argument on the other side of $t_*$ and for all other exceptional points proves the result.
\end{proof}

Now that we have proved all of the lemmas that we require, we are in a position to prove Theorem~\ref{thm:main}.
\begin{proof}[Proof of Theorem~\ref{thm:main}]
	Let $\rho \in M_2 \otimes M_n$ be a state that is PPT from spectrum. As already discussed, it suffices to prove that $\rho$ is separable. Begin by writing $\rho$ as a block matrix as usual:
	\begin{align*}
		\rho = \begin{bmatrix}
			A & B \\
			B^\dagger & C
		\end{bmatrix}.
	\end{align*}
	Replacing $\rho$ by $(U \otimes I)^\dagger\rho(U \otimes I)$ does not affect whether or not it is PPT from spectrum or whether or not it is separable. It then follows from Lemma~\ref{lem:U1U2_exists} that we can assume without loss of generality that there exist eigenvectors $\ket{a_{\textup{min}}}$ and $\ket{c_{\textup{min}}}$ corresponding to minimal eigenvalues of $A$ and $C$, respectively, and a unitary matrix $V \in M_n$, such that
	\begin{align}\label{eq:Vexists}
		| \bra{c_{\textup{min}}}VBV\ket{a_{\textup{min}}} | = \|B\|.
	\end{align}
	We will make use of this assumption shortly.
	
	Since $\rho$ is PPT from spectrum we know that
	\begin{align*}
		\begin{bmatrix}
			I & 0 \\
			0 & W
		\end{bmatrix}^\dagger\begin{bmatrix}
			A & B \\
			B^\dagger & C
		\end{bmatrix}\begin{bmatrix}
			I & 0 \\
			0 & W
		\end{bmatrix} = \begin{bmatrix}
			A & B W \\
			W^\dagger B^\dagger & W^\dagger CW
		\end{bmatrix}
	\end{align*}
	is PPT for all unitary matrices $W \in M_n$. Thus
	\begin{align}\label{eq:PPT_cond}
		\begin{bmatrix}
			A & W^\dagger B^\dagger \\
			B W & W^\dagger CW
		\end{bmatrix} \geq 0
	\end{align}
	for all unitary $W \in M_n$. For arbitrary $\ket{x},\ket{y} \in \mathbb{C}^n$, define $X_{x,y} = \diag(\ket{x},\ket{y})$. Then multiplying~\eqref{eq:PPT_cond} on the right by $X_{x,y}$ and on the left by $X_{x,y}^\dagger$ shows that
	\begin{align*}
		\begin{bmatrix}
			\bra{x}A\ket{x} & \bra{x}W^\dagger B^\dagger\ket{y} \\
			\bra{y}B W\ket{x} & \bra{y}W^\dagger CW\ket{y}
		\end{bmatrix} \geq 0
	\end{align*}
	for all $\ket{x},\ket{y} \in \mathbb{C}^n$ and unitary $W \in M_n$, which is equivalent to the statement that
	\begin{align*}
		\bra{x}A\ket{x}\cdot\bra{y}W^\dagger CW\ket{y} \geq \big|\bra{y}BW\ket{x}\big|^2
	\end{align*}
	for all $\ket{x},\ket{y} \in \mathbb{C}^n$ and unitary $W \in M_n$. To simplify this expression slightly, we define $\ket{z} = W\ket{y}$ so that we require
	\begin{align*}
		\bra{x}A\ket{x}\cdot\bra{z}C\ket{z} \geq |\bra{z}WBW\ket{x}|^2
	\end{align*}
	for all $\ket{x},\ket{z} \in \mathbb{C}^n$ and unitary $W \in M_n$. By choosing $\ket{x} = \ket{a_{\textup{min}}}$, $\ket{z} = \ket{c_{\textup{min}}}$, and $W = V$ all as in Equation~\eqref{eq:Vexists}, we see that
	\begin{align*}
		\lambda_{\textup{min}}(A)\cdot\lambda_{\textup{min}}(C) \geq |\bra{c_{\textup{min}}}VBV\ket{a_{\textup{min}}}|^2 = \|B\|^2.
	\end{align*}
	Lemma~\ref{lem:gen_toep_sep} then implies that $\rho$ is separable and completes the proof.
\end{proof}

In this Brief Report, we showed that states $\rho \in M_2 \otimes M_n$ that are PPT from spectrum are necessarily separable from spectrum. This is in stark contrast with the usual separability problem, where there are PPT states that are not separable for all $n \geq 4$. Since the states that are PPT from spectrum are already completely characterized, this gives a complete characterization of qubit--qudit separable from spectrum states.

This work raises some new questions, most notable of which is whether or not there are entangled PPT from spectrum states in $M_m \otimes M_n$ when $m,n \geq 3$. An answer either way to this question would be interesting:

If the answer is ``no'' then a state is separable from spectrum if and only if it is PPT from spectrum. Since PPT from spectrum states were completely characterized in \cite{Hil07}, the same characterization would immediately apply to separable from spectrum states.

On the other hand, if the answer is ``yes'' (i.e., there \emph{does} exist an entangled PPT from spectrum state $\rho \in M_m \otimes M_n$ for some $m,n \geq 3$) then such a state exhibits extremely strange properties. In particular, since entangled PPT states are known to be ``bound'' entangled (i.e., no pure state entanglement can be distilled from them via means of local operations and classical communication) \cite{HHH98}, it would follow that $\rho$ is entangled yet $U^\dagger\rho U$ is bound for all unitary matrices $U \in M_m \otimes M_n$. In other words, even if arbitrary (possibly entangling!) quantum gates $U$ can be applied to $\rho$ before the distillation procedure, it is still the case that no entanglement can be distilled from $\rho$.

It would also be interesting to characterize \emph{multipartite} separability from spectrum. For example, by making the association $M_2 \otimes M_2 \cong M_4$ in the usual way, Theorem~\ref{thm:main} tells us that every state $\rho \in M_2 \otimes M_2 \otimes M_2$ with $\lambda_1 \leq \lambda_7 + 2\sqrt{\lambda_6\lambda_8}$ is separable across every bipartite cut. However, it does not tell us whether or not all such $\rho$ are multipartite separable---that is, whether or not they can be written in the form
\begin{align*}
	\rho = \sum_i p_i\ketbra{v_i}{v_i} \otimes \ketbra{w_i}{w_i} \otimes \ketbra{x_i}{x_i}.
\end{align*}

{\bf Acknowledgements.} This work was supported by the Natural Sciences and Engineering Research Council of Canada.

\bibliography{../../_bibliographies_/quantum}

\end{document}